\documentclass[journal]{IEEEtran}

\usepackage{amsmath,amssymb,mathtools}
\usepackage{booktabs,array}
\usepackage{enumitem}
\usepackage{graphicx}
\usepackage{xcolor}
\usepackage{hyphenat}

\usepackage{tikz}
\usetikzlibrary{positioning,arrows.meta,backgrounds,calc}
\definecolor{full}{HTML}{2E8B57}     %
\definecolor{charging}{HTML}{E8A23D} %
\definecolor{peakcol}{HTML}{D1495B}  %
\definecolor{offcol}{HTML}{2E6E8E}   %
\newcommand{\batt}[4]{%
  \begin{scope}[shift={(#1,#2)}]
    \fill[#4] (0.025,0.025) rectangle ({0.025+0.35*#3},0.205);
    \draw[line width=0.4pt] (0,0) rectangle (0.40,0.23);
    \fill[black!75] (0.40,0.075) rectangle (0.45,0.155);
  \end{scope}}
\newcommand{\battchg}[2]{%
  \begin{scope}[shift={(#1,#2)}]
    \fill[charging] (0.025,0.025) rectangle (0.20,0.205);
    \draw[line width=0.4pt] (0,0) rectangle (0.40,0.23);
    \fill[black!75] (0.40,0.075) rectangle (0.45,0.155);
    \draw[line width=0.55pt, black] (0.255,0.185)--(0.165,0.105)--(0.215,0.105)--(0.135,0.030);
  \end{scope}}
\newcommand{\caricon}[2]{%
  \begin{scope}[shift={(#1,#2)}, line width=0.5pt]
    \draw[fill=black!5] (-0.34,0)--(-0.19,0)--(-0.10,0.17)--(0.12,0.17)
        --(0.20,0)--(0.34,0)--(0.34,-0.10)--(-0.34,-0.10)--cycle;
    \fill[black!60] (-0.19,-0.12) circle (0.058);
    \fill[black!60] (0.19,-0.12) circle (0.058);
  \end{scope}}

\usepackage{microtype}
\newtheorem{proposition}{Proposition}

\usepackage[hidelinks]{hyperref}

\begin{document}

\title{Scalable No-Stockout Charging Scheduling for Battery Swapping Under
Time-of-Use Prices}

\author{Eunbin~Cho, Junki~Cho, Hakjin~Lee, Jaehoon~Sim, and Junghoon~Seo%
\thanks{Eunbin Cho is with the School of Industrial and Systems Engineering,
Georgia Institute of Technology, Atlanta, GA 30332 USA (e-mail:
michelle98756@gmail.com). This work was performed while Eunbin Cho was an intern at
PIT IN. Junki Cho, Hakjin Lee, Jaehoon Sim, and Junghoon Seo are with PIT IN
Corp., Anyang, Gyeonggi-do 14088, Republic of Korea (e-mail: hj@pitin-ev.com;
jk@pitin-ev.com; simjeh@pitin-ev.com; sjh@pitin-ev.com).}%
\thanks{Corresponding author: Junghoon Seo (e-mail: sjh@pitin-ev.com).}}

\maketitle

\begin{abstract}
A battery-swapping station must provide every arriving vehicle with a charged battery while minimizing the time-of-use cost of recharging returned units. Coordinating heterogeneous compatibility, vehicle-specific return times, and finite charger capacity requires service-aware recharge decisions across the planning horizon. We formulate a per-battery mixed-integer linear program that captures these operational features under a hard no-stockout constraint and derive a provably equivalent reduced form with fewer explicit binary variables. In the synthetic scaling study, a price-guided battery-path heuristic returned a full-service schedule for every instance; regime-level median solve times ranged from $0.24$ to $8.0$ seconds. Its median cost premiums were $7$--$8\%$ over certified reference costs for small- and medium-scale instances, and its certified ex post optimality-gap upper bounds were $9$--$12\%$ for large- and extra-large-scale instances. For each operational baseline, the certified reference schedules reduced charging-energy cost by $50$--$60\%$ on instances that the baseline fully served and for which a certified reference was available. In a 30-day replay of $1{,}002$ swaps recorded at a commercial station, the reduced-model and heuristic rolling controllers served every swap and reduced charging-energy cost by approximately $50\%$ relative to immediate charging.
\end{abstract}

\begin{IEEEkeywords}
Battery swapping, column generation, electric vehicles and
electric mobility, energy management methods, optimization and control.
\end{IEEEkeywords}

\section{Introduction}
\label{sec:introduction}

\IEEEPARstart{A}{s} transportation electrifies, charging is a major operating cost for commercial electric-vehicle (EV) fleets. Connected commercial and autonomous-mobility fleets can communicate planned vehicle arrivals and expected battery-return times through their fleet-management systems. Battery swapping rapidly returns a vehicle to service and separates its operating time from battery charging: the station exchanges a depleted battery for a charged one and recharges the returned unit later~\cite{schneider2018,widrick2018}. Under time-of-use (TOU) pricing, this separation creates opportunities to shift recharging to low-price periods~\cite{tan2019,sarker2015}. The scheduling challenge is to capture these savings while preserving a charged battery for every scheduled arrival.

The recharge schedule determines both a swap station's cost and its reliability, and the literature offers two complementary perspectives. Cost-oriented models schedule recharging against time-varying prices using per-slot ready-battery counts~\cite{tan2019,zeng2025}, procure low-price energy for arbitrage~\cite{sarker2015}, modulate charging power~\cite{deng2026}, or learn schedules with average-service targets or service penalties~\cite{chen2024time,li2025cdrl}. Availability-oriented models size inventory and coordinate holding and recharging decisions to meet uncertain demand~\cite{schneider2018,widrick2018,sun2019}, representing service through replenishment policies or risk measures over interchangeable battery pools~\cite{han2025,shalaby2023}.

Each recharge jointly determines charging cost and the time at which a physical battery becomes ready for its next service. A unified finite-horizon formulation must therefore represent battery identity, event-specific compatibility and return times, finite charger capacity, a hard per-event no-stockout constraint, and certifiable TOU-cost minimization. The formulation developed here integrates these elements by making each service decision create a recharge job and linking that job to a specific battery, compatible charger, and start time across the horizon.

This paper formulates the problem as a program over individual batteries, determining cost and service jointly. The per-battery representation preserves physical identity through event-specific return times, compatibility, and repeated reuse. A hard constraint enforces no-stockout service for every event in the deterministic planning input. At each decision epoch, the station re-optimizes using the arrivals and expected return times communicated by the fleet-management system.

The exact program has two mixed-integer linear programming (MILP) forms. Model~F provides a direct transcription of the station dynamics, and Model~R preserves its optimum with fewer explicit binary variables (Section~\ref{sec:methods}). Model~R extends exact certification to larger station-scale instances, while a column-generation-based price-guided battery-path heuristic (P-BPD) returns full-service schedules across the evaluated range (Section~\ref{sec:price-guided-bpd}). The experiments quantify solution quality, service, and computational scaling (Section~\ref{sec:experiments}).

The \emph{contributions} of this paper are as follows:
\begin{itemize}[topsep=2pt, partopsep=0pt]
\item \textbf{An exact per-battery model and its reduction.} A single exact program unifies TOU cost minimization with a hard per-event no-stockout requirement, and an equivalent reduction concentrates the binary decisions in service assignments and charging starts.%
\item \textbf{Certified scaling and fast battery-path scheduling.} Model~R extends exact certification beyond the direct formulation's range, and P-BPD constructs feasible large-scale schedules with certified ex post optimality-gap bounds.%
\item \textbf{Commercial-station replay and demand scaling.} A 30-day station replay and a demand-scaling study quantify service, savings, and the computational crossover between Model~R and P-BPD.%
\end{itemize}

\section{Related Work}
\label{sec:related-work}

This problem connects three research areas: inventory control under service-level requirements, charge scheduling under TOU prices, and column generation for computational scalability. The proposed framework integrates their respective strengths---availability guarantees, a price-aware objective, and path-based solution machinery.

\paragraph{Battery inventory control under a service level}
For a swap station, charged-battery availability preserves the rapid service that motivates swapping. This availability problem corresponds to the lost-sales service level of classical inventory theory~\cite{zipkin2000} and is studied in transportation and energy operations research. The literature develops along two axes. On the planning axis, Schneider et al.~\cite{schneider2018} optimize joint battery purchasing and charging under demand uncertainty, and Sun et al.~\cite{sun2019} extend this approach to a price-aware fluid model. On the operational axis, Widrick et al.~\cite{widrick2018} derive monotone optimal charge-discharge policies via a Markov decision process, while Mao et al.~\cite{mao2023} obtain structural charging and replenishment policies under demand, supply, and price uncertainty. Heterogeneous compatibility extends this service requirement to individual physical batteries; the per-battery formulation supplies this event-level granularity while guaranteeing a ready compatible battery for every arrival.

\paragraph{Charge scheduling under time-of-use prices}
Charging cost is driven by TOU prices, which fall in the off-peak hours. Single-station cost minimization has been modeled through methods ranging from mixed-integer programs~\cite{tan2019,sarker2015} to a constrained Markov decision process~\cite{ko2022}. Zeng et al.~\cite{zeng2025} extend the problem to multiple stations coupled through a distribution network, modeling charging bays as explicit finite states. Subsequent work incorporates demand uncertainty and forecasts~\cite{nayak2024,pla2025demand}, controls charging power~\cite{deng2026}, and solves online through rolling-horizon dispatch~\cite{shalaby2022}, predictive control~\cite{casini2021,li2025degradation}, or reinforcement learning~\cite{chen2024time,li2025cdrl,shalaby2023,park2024,huang2024,jin2023,renga2024}. Representative cost-focused formulations express service through aggregate ready counts~\cite{tan2019,zeng2025}, soft penalties~\cite{chen2024time,li2025cdrl}, or chance constraints~\cite{han2025}. The present work adds physical-battery compatibility and a hard event-level service guarantee to a certifiable finite-horizon cost model.

\paragraph{Column generation and battery-path decomposition}
Column generation supports fast online re-solving by building a schedule from a compact, growing set of candidate columns~\cite{barnhart1998,desaulniers2005}. The technique underlies vehicle and transit scheduling with finite charging capacity~\cite{parmentier2023,devos2024} and TOU machine scheduling~\cite{ding2016,tian2024}, which shares the price-indexed temporal structure of the present cost subproblem. In the present setting, service decisions create the recharge jobs endogenously, and each job defines a battery's path from recharge to its next service. Related path columns support electric-vehicle service-network design~\cite{diao2026snd,diao2026swap}, while per-battery Lagrangian decomposition supports fleet charging~\cite{fotedar2025}. P-BPD specializes battery-path decomposition to single-station TOU recharge and benchmarks its schedules against certified reference solutions and dual-bound optimality-gap guarantees.

\section{System Model and Problem Definition}
\label{sec:system-model}
\subsection{Station operation}
\label{sec:background}

\begin{figure}[t]
\centering
\begin{tikzpicture}[>=Latex, font=\footnotesize, x=0.98cm, y=1cm,
  flow/.style={->, line width=0.7pt, black!78},
  ident/.style={->, densely dashed, line width=0.6pt, black!62},
  link/.style={->, dash pattern=on 3pt off 2.4pt, line width=0.55pt, black!50}]
\node[font=\footnotesize\bfseries] at (0.05,7.78){(a)};
\node[draw, rounded corners=3pt, line width=0.5pt,
      minimum width=5.9cm, minimum height=2.25cm] (stn) at (4.30,6.42){};
\node[anchor=north west, font=\itshape\footnotesize, inner sep=2pt]
      at (stn.north west){Station};
\caricon{0.28}{6.63}
\node[font=\footnotesize, align=center] at (0.28,6.08){arriving EV};
\draw[flow] (0.68,6.63) -- (1.35,6.63);
\batt{1.50}{6.52}{0.0}{full}
\batt{1.98}{6.52}{0.0}{full}
\node[font=\footnotesize, align=center] at (1.95,6.10){returned\\(depleted)};
\draw[flow] (2.55,6.63) -- (2.90,6.63);
\node[draw, rounded corners=2pt, line width=0.4pt, fill=black!5,
      minimum width=0.80cm, minimum height=0.96cm] at (3.54,6.70){};
\node[draw, rounded corners=2pt, line width=0.45pt, fill=white,
      minimum width=0.80cm, minimum height=0.96cm] (chg) at (3.42,6.63){};
\battchg{3.20}{6.52}
\node[font=\footnotesize, align=center, anchor=north] at (3.40,6.12){chargers\\(scarce)};
\draw[flow] (4.00,6.63) -- (4.47,6.63);
\batt{4.74}{6.74}{1.0}{full}
\batt{4.74}{6.40}{1.0}{full}
\batt{5.22}{6.57}{1.0}{full}
\node[font=\footnotesize, align=center, anchor=north] at (5.04,6.12){ready stock\\(full)};
\draw[flow] (5.72,6.63) -- (6.00,6.63);
\node[draw, rounded corners=2pt, line width=0.45pt, fill=full!9,
      minimum width=1.0cm, minimum height=0.72cm, align=center, font=\footnotesize]
      (gate) at (6.58,6.63){no-\\stockout};
\draw[flow] (7.14,6.63) -- (7.54,6.63);
\caricon{7.97}{6.63}
\node[font=\footnotesize, align=center] at (7.97,6.10){departs\\(full)};
\draw[ident] (7.97,7.00) to[out=118,in=60] (1.95,6.88);
\node[font=\footnotesize, align=center, fill=white, inner sep=1pt] at (4.85,7.62)
   {the dispatched battery is returned depleted after the trip};
\node[font=\footnotesize\bfseries] at (0.05,4.34){(b)};
\begin{scope}[yshift=1.32cm]
\begin{scope}[on background layer]
  \fill[offcol!12]  (1.00,0.70) rectangle (3.28,2.78);
  \fill[black!6]    (3.28,0.70) rectangle (5.28,2.78);
  \fill[peakcol!12] (5.28,0.70) rectangle (6.99,2.78);
  \fill[black!6]    (6.99,0.70) rectangle (7.28,2.78);
  \fill[offcol!12]  (7.28,0.70) rectangle (7.85,2.78);
\end{scope}
\draw[->, line width=0.5pt] (1.00,0.70) -- (8.10,0.70)
  node[above left,font=\footnotesize]{hour};
\draw[->, line width=0.5pt] (1.00,0.70) -- (1.00,3.00) node[above,font=\footnotesize]{KRW/kWh};
\draw[line width=0.9pt, black!85]
  (1.00,1.30)--(3.28,1.30)--(3.28,1.71)--(5.28,1.71)--(5.28,2.66)--(6.99,2.66)
  --(6.99,1.71)--(7.28,1.71)--(7.28,1.30)--(7.85,1.30);
\node[font=\footnotesize, anchor=south] at (2.14,1.32){83};
\node[font=\footnotesize, anchor=south] at (4.28,1.73){140};
\node[font=\footnotesize, anchor=south] at (6.13,2.68){271};
\foreach \h/\xx in {0/1.00,8/3.28,15/5.28,21/6.99,24/7.85}{
  \draw[line width=0.4pt] (\xx,0.70)--(\xx,0.62);
  \node[font=\footnotesize,anchor=north] at (\xx,0.62){\h};}
\node[font=\footnotesize, text=offcol!55!black] at (2.14,0.40){off-peak};
\node[font=\footnotesize] at (4.28,0.40){mid};
\node[font=\footnotesize, text=peakcol!65!black] at (6.13,0.40){peak};
\draw[<->, line width=0.4pt, black!60] (1.20,1.31) -- (1.20,2.65);
\node[font=\footnotesize, anchor=west] at (1.30,1.99){$3.3\times$};
\end{scope}
\draw[link] (3.40,5.30) to[out=-100,in=120] (2.45,3.10);
\node[font=\footnotesize, align=center, fill=white, inner sep=1pt] at (4.62,4.62)
   {charging deferred to low-price hours};
\end{tikzpicture}
\caption{(a)~Station operation and (b)~the time-of-use tariff. Glyphs: solid = full,
bolt = charging, outline = depleted.}
\label{fig:station}
\end{figure}

A battery-swap station serves a commercial electric-vehicle fleet whose productive time depends on vehicle availability. A brief exchange replaces a depleted battery with a fully charged one, returning the vehicle to service while charging proceeds separately.

The basic unit of this operation is the \emph{event}, a single customer swap. At an event, a vehicle arrives, receives a full battery, and departs. The battery returns depleted when the trip ends; the trip duration determines this \emph{return} epoch. The time spent away from the station is the trip-specific \emph{return delay}. The station then recharges the returned battery for its next issue (Fig.~\ref{fig:station}(a)).

The operation involves events, batteries, and chargers linked by two compatibility relations. Each event draws from its \emph{demand pool}, the batteries matching the vehicle's model and capacity. Each battery also belongs to a \emph{compatibility class} that identifies its compatible chargers. Thus, every service pairs an event with a battery from its demand pool and routes the returned battery to a charger of the same class.

Recharging capacity consists of a finite set of chargers, each reserved by one battery for its full planned recharge. Each recharge has a station-selected \emph{charge start} and a corresponding \emph{charge finish}. At each re-plan, its integer reserved duration is a deterministic input indexed by the event, battery, and charger, and $e_c$ gives the scheduled energy drawn in each occupied slot. The objective therefore gives the exact cost for these planning inputs. When the reserved duration is a charge-to-full upper bound, an early finish preserves schedule feasibility and makes the computed energy cost an upper planning estimate.

At any moment, a battery present at the station is in one of three states: \emph{ready} (full and available to serve), \emph{waiting} (returned but not yet charging), or \emph{charging}. An event's \emph{full stock} is the ready subset of its demand pool.
The station purchases electricity under a \emph{time-of-use tariff}, with lower off-peak prices. The station reduces cost by scheduling returned batteries in lower-price hours (Fig.~\ref{fig:station}(b)). The \emph{no-stockout rule} sets the service-aware timing of these recharges so that every arriving vehicle receives a battery from full stock. A rolling schedule implements this coordination.

In the connected-fleet setting, the fleet-management system supplies planned arrivals and expected return epochs over the look-ahead horizon to each re-plan. The re-plan assigns each event a battery and schedules the recharge associated with each in-horizon return on a compatible charger. It minimizes TOU charging cost subject to the no-stockout rule and charger capacity. The controller executes actions as they become due and refreshes the schedule at the next event using updated fleet information.

\subsection{Sets and parameters}

\begin{figure}[t]
\centering
\begin{tikzpicture}[>=stealth, font=\footnotesize, x=1.05cm, y=1.0cm]
  \node[anchor=west, font=\footnotesize, text=black!55] at (-0.50,1.65) {charger};
  \fill[charging!18] (3,1.15) rectangle (5,1.41);
  \draw[charging, line width=0.7pt] (3,1.15) rectangle (5,1.41);
  \battchg{3.55}{1.165}
  \node[font=\footnotesize] at (4.0,1.68) {recharge ($\tau_{kbc}$)};
  \batt{5.08}{1.165}{1}{full}
  \draw[gray!55, dotted] (3,0.12) -- (3,1.15);
  \draw[gray!55, dotted] (5,0.12) -- (5,1.15);
  \draw[->, line width=0.7pt] (-0.55,0) -- (6.9,0);
  \foreach \x in {0,1,2,3,4,5,6} \draw[line width=0.5pt] (\x,0.09)--(\x,-0.09);
  \node[below=1pt, font=\footnotesize] at (0,-0.10) {$a_k$};
  \node[below=1pt, font=\footnotesize] at (3,-0.10) {$r_k{=}a_k{+}\ell_k$};
  \node[below=1pt, font=\footnotesize] at (5,-0.10) {$r_k{+}\tau_{kbc}$};
  \node[anchor=east, font=\footnotesize, text=black!55] at (6.85,0.28) {epoch};
  \batt{-0.20}{0.42}{1}{full}
  \draw[->, line width=0.5pt] (0,0.40) -- (0,0.12);
  \node[above=8pt, font=\footnotesize] at (0,0.42) {dispatch};
  \batt{2.80}{0.42}{0}{full}
  \draw[->, line width=0.5pt] (3,0.40) -- (3,0.12);
  \node[above=8pt, font=\footnotesize] at (3,0.42) {return};
  \draw[|<->|, line width=0.4pt] (0,-0.62) -- (3,-0.62);
  \node[below=0pt, font=\footnotesize] at (1.5,-0.60) {$\ell_k$};
  \begin{scope}[yshift=-1.42cm]
    \node[font=\footnotesize\itshape, text=black!55, anchor=west, inner sep=0pt]
          (epochlabel) at (-0.50,0) {each epoch:};
    \node[draw, rounded corners=2pt, line width=0.4pt, fill=black!3, font=\footnotesize,
          inner xsep=2pt, inner ysep=3pt, minimum height=4.8mm,
          right=3mm of epochlabel] (s1) {\textcircled{1} recognize returns};
    \node[draw, rounded corners=2pt, line width=0.4pt, fill=black!3, font=\footnotesize,
          inner xsep=2pt, inner ysep=3pt, minimum height=4.8mm,
          right=3mm of s1] (s2) {\textcircled{2} serve};
    \node[draw, rounded corners=2pt, line width=0.4pt, fill=black!3, font=\footnotesize,
          inner xsep=2pt, inner ysep=3pt, minimum height=4.8mm,
          right=3mm of s2] (s3) {\textcircled{3} start charges};
    \draw[->, line width=0.5pt] (s1.east) -- (s2.west);
    \draw[->, line width=0.5pt] (s2.east) -- (s3.west);
  \end{scope}
\end{tikzpicture}
\caption{Within-epoch ordering for an immediate-start example ($s=r_k$): the battery is full at $a_k$, returns depleted at $r_k=a_k+\ell_k$, and recharges over $\tau_{kbc}$ slots. Glyphs as in Fig.~\ref{fig:station}.}
\label{fig:epoch-timing}
\end{figure}

Let $T_{\max}\in\mathbb Z_{++}$ denote the finite horizon length. Time is split into state epochs $\mathcal{T}^{\mathrm E}=\{0,1,\ldots,T_{\max}\}$ and operation slots $\mathcal{T}^{\mathrm S}=\{0,1,\ldots,T_{\max}-1\}$. Slot $t$ is the interval $[t,t+1)$. Epoch $t$ is the instant at the start of slot $t$, so an event time in $\mathcal{T}^{\mathrm S}$ names the epoch beginning that slot. At epoch $t$, battery returns and charging completions are recognized first. Swap events are then served from the pre-service full inventory, after which charging jobs selected at epoch $t$ start and occupy slot $t$.

Let $\mathcal{B}$, $\mathcal{C}$, and $\mathcal{K}$ be the finite sets of physical batteries, chargers, and customer swap events. Each event $k\in\mathcal{K}$ has an arrival epoch $a_k\in\mathcal{T}^{\mathrm S}$ and a return delay $\ell_k\in\mathbb{Z}_{++}$. Its return epoch is $r_k=a_k+\ell_k$, which may lie beyond the horizon. Let $\lambda_t\in\mathbb R_+$ denote the electricity price in slot $t$, and let $e_c\in\mathbb R_{++}$ be the scheduled energy charger $c$ uses in one occupied slot.

Each battery $b$ has a horizon-initial ready indicator $A_b^0\in\{0,1\}$, a demand pool $\mathrm{pool}_b$, and a compatibility class $\mathrm{class}_b$. The canonical formulation initializes ready batteries through $A_b^0$; the source-job augmentation in Section~\ref{sec:controllers} brings in-trip, waiting, and charging batteries into rolling re-plans. Each event $k$ requires demand pool $\mathrm{pool}_k$, and each charger $c$ has compatibility class $\mathrm{class}_c$. Define
\begin{align}
\mathcal{B}_k&=\{b\in\mathcal{B}:\mathrm{pool}_b=\mathrm{pool}_k\},\\
\mathcal{C}_b&=\{c\in\mathcal{C}:\mathrm{class}_c=\mathrm{class}_b\}.
\end{align}
If event $k$ issues physical battery $b$ and selects charger $c$, the returned battery's charging duration is $\tau_{kbc}\in\mathbb{Z}_{++}$. The feasible assignment and in-horizon return-job sets are
\begin{align}
\mathcal{Y}&=\{(k,b):k\in\mathcal{K},\ b\in\mathcal{B}_k\},\\
\mathcal{J}&=\{(k,b)\in\mathcal{Y}: r_k\le T_{\max}\}.
\end{align}
For $(k,b)\in\mathcal{J}$ and $c\in\mathcal{C}_b$, the feasible charging-start set is
\begin{equation}
\mathcal{S}_{kbc}=\{s\in\mathcal{T}^{\mathrm S}: r_k\le s,\ s+\tau_{kbc}\le T_{\max}\}.
\end{equation}
All summations over empty sets are interpreted as zero.
For every event $k$ with $r_k\le T_{\max}$, we assume the existence of a compatible pair $b\in\mathcal B_k$ and $c\in\mathcal C_b$ with $\mathcal S_{kbc}\ne\varnothing$; equivalently, every modeled return precedes $T_{\max}$ and admits an in-horizon charging start.

\section{Exact Formulation and Reduction}
\label{sec:methods}

This section presents two equivalent exact models for the offline no-stockout charge-scheduling problem: the direct five-family form~F (Section~\ref{sec:full-milp}) and its reduction~R (Section~\ref{sec:reduced}). Both models build on the finite-horizon, slot-indexed charging structure of prior battery-swapping scheduling~\cite{tan2019} by explicitly tracking service assignments and states for each physical battery.

\subsection{Decision variables and accounting functions}
\label{sec:variables}

Model~F uses five binary variable families: service assignment $y_{kb}$, charging start $\sigma_{kbcs}$, availability $A_{bt}$, waiting state $W_{bt}$, and charger occupancy $x_{kbct}$. Table~\ref{tab:notation} summarizes the formulation. Six derived quantities record issues, charging, completions, cumulative flows, and presence. At every feasible point, the constraints make $D_{bt}(y)$, $G_{bt}(x)$, and $I^+_{bt}(y)$ binary indicators; $\Gamma_{bu}(\sigma)$, $\rho_{bt}(y)$, and $\varsigma_{bt}(\sigma)$ retain their count interpretations.

\begin{table}[t]
\centering
\caption{Summary of notation.}
\label{tab:notation}
\footnotesize
\begin{tabular}{@{}lp{0.72\columnwidth}@{}}
\toprule
Symbol & Description \\
\midrule
\multicolumn{2}{@{}l}{\textit{Sets and indices}}\\
$\mathcal{T}^{\mathrm E}$ & epochs $\{0,\ldots,T_{\max}\}$ \\
$\mathcal{T}^{\mathrm S}$ & slots $\{0,\ldots,T_{\max}-1\}$ \\
$\mathcal{B}$, $b$ & batteries \\
$\mathcal{C}$, $c$ & chargers \\
$\mathcal{K}$, $k$ & swap events \\
$\mathcal{B}_k$ & batteries compatible with $k$ \\
$\mathcal{C}_b$ & chargers compatible with $b$ \\
$\mathcal{Y}$ & feasible assignments $(k,b)$ \\
$\mathcal{J}$ & in-horizon return jobs \\
$\mathcal{S}_{kbc}$ & feasible start slots, $(k,b)$ on $c$ \\
\midrule
\multicolumn{2}{@{}l}{\textit{Parameters}}\\
$T_{\max}$ & horizon length (slots) \\
$a_k$ & arrival epoch of $k$ \\
$\ell_k$ & return delay of $k$ \\
$r_k$ & return epoch of $k$ ($=a_k+\ell_k$) \\
$\lambda_t$ & price in slot $t$ \\
$e_c$ & energy $c$ draws per slot \\
$\tau_{kbc}$ & charging duration, $(k,b)$ on $c$ \\
$A_b^0$ & $1$ if $b$ is ready at the initial epoch \\
$\mathrm{pool}_b,\mathrm{pool}_k$ & demand pool of $b$ / $k$ \\
$\mathrm{class}_b,\mathrm{class}_c$ & compatibility class of $b$ / $c$ \\
\midrule
\multicolumn{2}{@{}l}{\textit{Decision variables (binary)}}\\
$y_{kb}$ & $=1$ if $k$ served by $b$ \\
$A_{bt}$ & $=1$ if $b$ full and ready before epoch $t$'s service \\
$W_{bt}$ & $=1$ if $b$ present, non-full, off charger \\
$\sigma_{kbcs}$ & $=1$ if job $(k,b)$ starts on $c$ at slot $s$ \\
$x_{kbct}$ & $=1$ if job $(k,b)$ occupies $c$ in slot $t$ \\
\midrule
\multicolumn{2}{@{}l}{\textit{Derived quantities}}\\
$D_{bt}(y)$ & same-epoch service count, $b$ \\
$G_{bt}(x)$ & charging-occupancy count, $b$ \\
$\Gamma_{bu}(\sigma)$ & charging completions of $b$ at epoch $u$ \\
$\rho_{bt}(y)$ & cumulative in-horizon returns of $b$ \\
$\varsigma_{bt}(\sigma)$ & cumulative charging starts of $b$ \\
$I^+_{bt}(y)$ & net post-service presence count of $b$ \\
\bottomrule
\end{tabular}
\end{table}

The battery cycle begins with an issue, as recorded by the same-epoch service count
\begin{equation}
D_{bt}(y)=\sum_{\substack{k\in\mathcal{K}:a_k=t,\ (k,b)\in\mathcal{Y}}}y_{kb},
\qquad b\in\mathcal{B},\ t\in\mathcal{T}^{\mathrm E},
\label{eq:D-def-paper}
\end{equation}
with $D_{b,T_{\max}}(y)=0$ following from $a_k\in\mathcal T^{\mathrm S}$. In every feasible solution, the serve row and binary availability imply $D_{bt}\in\{0,1\}$. After its return, a scheduled battery occupies a charger. Its charging-occupancy count is
\begin{equation}
G_{bt}(x)=
\begin{cases}
\displaystyle \sum_{\substack{(k,b)\in\mathcal{J}}}\sum_{c\in\mathcal{C}_b}x_{kbct}, & t\in\mathcal{T}^{\mathrm S},\\[0.7em]
0, & t=T_{\max}.
\end{cases}
\label{eq:G-def-paper}
\end{equation}
The constraints imply $G_{bt}(x)\in\{0,1\}$ at every feasible point. For $t\in\mathcal T^{\mathrm S}$, it indicates that battery $b$ occupies a charger in slot $t$, and $G_{b,T_{\max}}=0$ by definition.
Each recharge finishes $\tau_{kbc}$ slots after its start. The charging completions of battery $b$ at epoch $u$ are
\begin{equation}
\Gamma_{bu}(\sigma)=
\sum_{\substack{(k,b)\in\mathcal{J}}}
\sum_{c\in\mathcal{C}_b}
\sum_{\substack{s\in\mathcal{S}_{kbc}:s+\tau_{kbc}=u}}
\sigma_{kbcs}.
\label{eq:Gamma-def-paper}
\end{equation}
Two running totals track the cycle over the planning horizon. The cumulative in-horizon returns and cumulative charging starts are
\begin{align}
\rho_{bt}(y)&=\sum_{\substack{(k,b)\in\mathcal{Y}:r_k\le t}}y_{kb},
\label{eq:R-def-paper}\\
\varsigma_{bt}(\sigma)&=\sum_{\substack{(k,b)\in\mathcal{J}}}
\sum_{c\in\mathcal{C}_b}
\sum_{\substack{s\in\mathcal{S}_{kbc}:s\le t}}
\sigma_{kbcs}.
\label{eq:S-def-paper}
\end{align}
The post-service station-side presence is obtained by netting issues against initial stock and returns:
\begin{equation}
I^+_{bt}(y)=A_b^0+\rho_{bt}(y)-\sum_{\substack{k\in\mathcal{K}:(k,b)\in\mathcal{Y},\ a_k\le t}}y_{kb}.
\label{eq:Iplus-def-paper}
\end{equation}
At every feasible point, $I^+_{bt}(y)$ is integer-valued and satisfies $0\le I^+_{bt}(y)\le1$; it is therefore the post-service station-presence indicator for battery $b$ at epoch $t$.

\begin{figure*}[t]
\centering
\setlength{\fboxsep}{6pt}
\fbox{%
\begin{minipage}{\dimexpr\textwidth-2\fboxsep-2\fboxrule\relax}
\small
\setlength{\jot}{2pt}
\begin{subequations}
\label{prob:compact-milp-paper}
\begin{align}
\min_{y,A,W,\sigma,x}\quad
& \sum_{t\in\mathcal{T}^{\mathrm S}}\sum_{c\in\mathcal{C}}\lambda_t e_c
\sum_{\substack{(k,b)\in\mathcal{J}:c\in\mathcal{C}_b}}x_{kbct}
\label{eq:compact-obj-paper}\\
\mathrm{s.t.}\quad
& \sum_{b\in\mathcal{B}_k}y_{kb}=1,
&& \forall k\in\mathcal{K},
\label{eq:compact-assign-paper}\\
& D_{bt}(y)\le A_{bt},
&& \forall b\in\mathcal{B},\ t\in\mathcal{T}^{\mathrm S},
\label{eq:compact-serve-paper}\\
& A_{b0}=A_b^0,
&& \forall b\in\mathcal{B},
\label{eq:compact-init-paper}\\
& A_{b,t+1}=A_{bt}-D_{bt}(y)+\Gamma_{b,t+1}(\sigma),
&& \forall b\in\mathcal{B},\ t\in\mathcal{T}^{\mathrm S},
\label{eq:compact-Adyn-paper}\\
& \sum_{c\in\mathcal{C}_b}\sum_{s\in\mathcal{S}_{kbc}}\sigma_{kbcs}= y_{kb},
&& \forall (k,b)\in\mathcal{J},
\label{eq:compact-start-paper}\\
& W_{bt}=\rho_{bt}(y)-\varsigma_{bt}(\sigma),
&& \forall b\in\mathcal{B},\ t\in\mathcal{T}^{\mathrm E},
\label{eq:compact-W-paper}\\
& x_{kbct}=\sum_{\substack{s\in\mathcal{S}_{kbc}:s\le t<s+\tau_{kbc}}}\sigma_{kbcs},
&& \forall (k,b)\in\mathcal{J},\ c\in\mathcal{C}_b,\ t\in\mathcal{T}^{\mathrm S},
\label{eq:compact-duration-paper}\\
& \sum_{\substack{(k,b)\in\mathcal{J}:c\in\mathcal{C}_b}}x_{kbct}\le 1,
&& \forall c\in\mathcal{C},\ t\in\mathcal{T}^{\mathrm S},
\label{eq:compact-charger-paper}\\
& A_{bt}-D_{bt}(y)+W_{bt}+G_{bt}(x)=I^+_{bt}(y),
&& \forall b\in\mathcal{B},\ t\in\mathcal{T}^{\mathrm E},
\label{eq:compact-partition-paper}\\
& y_{kb}\in\{0,1\},
&& \forall (k,b)\in\mathcal Y,
\label{eq:compact-binary-paper}\\
& A_{bt},W_{bt}\in\{0,1\},
&& \forall b\in\mathcal B,\ t\in\mathcal T^{\mathrm E},\notag\\
& \sigma_{kbcs}\in\{0,1\},
&& \forall (k,b)\in\mathcal J,\ c\in\mathcal C_b,\ s\in\mathcal S_{kbc},\notag\\
& x_{kbct}\in\{0,1\},
&& \forall (k,b)\in\mathcal J,\ c\in\mathcal C_b,\ t\in\mathcal T^{\mathrm S}.\notag
\end{align}
\end{subequations}
\end{minipage}%
}
\caption{Full five-family MILP (model~F).}
\label{fig:compact-milp-box}
\end{figure*}

\subsection{The full five-family model~F}
\label{sec:full-milp}

Figure~\ref{fig:compact-milp-box} states the no-stockout scheduling problem as a mixed-integer linear program. Because it retains all five binary families, this formulation is termed the \emph{full} model~F. Its rows are organized into functional groups, each corresponding to an operational requirement in Section~\ref{sec:background}.

The service rows~\eqref{eq:compact-assign-paper}--\eqref{eq:compact-serve-paper} encode the no-stockout rule. The assignment row~\eqref{eq:compact-assign-paper} assigns exactly one battery from the event's demand pool $\mathcal{B}_k$, ensuring compatible service for every arrival. The serve row~\eqref{eq:compact-serve-paper}, $D_{bt}(y)\le A_{bt}$, requires that battery to be full and ready before the event epoch. Thus, every issue draws from pre-service full stock, and the two rows jointly provide the formal no-stockout guarantee throughout the planning horizon.

The availability rows~\eqref{eq:compact-init-paper}--\eqref{eq:compact-Adyn-paper} track each battery's ready state over the planning horizon. The initialization~\eqref{eq:compact-init-paper} fixes each battery's full state at the beginning of the planning horizon. The dynamics~\eqref{eq:compact-Adyn-paper} then advance this state according to $A_{b,t+1}=A_{bt}-D_{bt}(y)+\Gamma_{b,t+1}(\sigma)$, removing each issued battery and crediting each recharge completion. The completion is credited at the next epoch, consistent with the timing in Section~\ref{sec:background}.

The job rows~\eqref{eq:compact-start-paper} and~\eqref{eq:compact-duration-paper} construct each recharge from the service that creates it. The start row~\eqref{eq:compact-start-paper}, $\sum_{c}\sum_{s}\sigma_{kbcs}=y_{kb}$, requires exactly one charging start for every assigned service whose return lies in the modeled horizon. The duration row~\eqref{eq:compact-duration-paper}, $x_{kbct}=\sum_{s\le t<s+\tau_{kbc}}\sigma_{kbcs}$, expands that start into the resulting occupancy and reserves the charger for $\tau_{kbc}$ slots, the prescribed charge-to-full upper bound from Section~\ref{sec:background}.

The charger row~\eqref{eq:compact-charger-paper} assigns at most one battery to each charger in each slot, with compatibility incorporated through $\mathcal{C}_b$. The accounting identity below also enforces $G_{bt}(x)\le I^+_{bt}(y)\le1$, limiting each battery to one charger at a time. Finite charger capacity couples recharge decisions across the horizon by allocating each low-price slot among waiting batteries. The start-to-occupancy expansion in~\eqref{eq:compact-duration-paper} and the slotwise capacity constraint~\eqref{eq:compact-charger-paper} follow a standard time-indexed construction for scheduling under TOU prices~\cite{tian2024}, with $y_{kb}$ endogenously activating each charging job.

The accounting rows~\eqref{eq:compact-W-paper} and~\eqref{eq:compact-partition-paper} express physical conservation. The waiting flag is $W_{bt}=\rho_{bt}(y)-\varsigma_{bt}(\sigma)$. The partition row states that $A_{bt}-D_{bt}(y)$, $W_{bt}$, and $G_{bt}(x)$ are battery $b$'s mutually exclusive station-side states after service: ready, waiting, and charging. Their sum is $I^+_{bt}(y)$. Per-battery indexing preserves each physical unit's identity and enforces conservation at the unit level. The displayed rows imply $0\le I^+_{bt}(y)\le1$ and $G_{bt}(x)\le1$, making separate presence and battery-capacity constraints redundant.

The integrality row~\eqref{eq:compact-binary-paper} declares the assignment, charging-start, availability, waiting, and occupancy variables binary, thereby encoding discrete operating states. In particular, integrality of $A$ and $W$ enforces the indivisibility of each physical unit. Section~\ref{sec:reduced} establishes the implied integrality of $A$, $W$, and $x$; Model~R therefore declares only $y$ and $\sigma$ binary.

The objective~\eqref{eq:compact-obj-paper} minimizes the TOU charging cost, $\sum_{t}\sum_{c}\lambda_t e_c\sum x_{kbct}$, by pricing every reserved charger-slot. A feasible point of~\eqref{prob:compact-milp-paper} represents a complete operating plan over the planning horizon, and the optimum is the least-cost plan satisfying the no-stockout requirement.

\begin{figure}[t]
\centering
\setlength{\fboxsep}{4pt}
\fbox{%
\begin{minipage}{\dimexpr\columnwidth-2\fboxsep-2\fboxrule\relax}
\footnotesize
\setlength{\jot}{2pt}
\begin{align}
\min_{y,A,\sigma}\quad & \sum_{(k,b)\in\mathcal{J}}\sum_{c\in\mathcal{C}_b}\sum_{s\in\mathcal{S}_{kbc}} e_c\Big(\sum_{t=s}^{s+\tau_{kbc}-1}\lambda_t\Big)\sigma_{kbcs}
\label{eq:reduced-obj-paper}\\
\text{s.t.}\quad & \text{\eqref{eq:compact-assign-paper}--\eqref{eq:compact-start-paper} and \eqref{eq:compact-charger-paper} with $x=\hat{x}(\sigma)$,}\notag\\
& \text{$y_{kb},\sigma_{kbcs}\in\{0,1\}$ and $A_{bt}\in[0,1]$ for their respective indices.}\notag
\end{align}
\end{minipage}%
}
\caption{Reduced model~R.}
\label{fig:reduced-milp-box}
\end{figure}

\subsection{The reduced model~R}
\label{sec:reduced}

Model~F exposes all five state and decision families in a direct exact transcription. Model~R concentrates the independent binary decisions in the assignment $y$ and charging start $\sigma$, because $(y,\sigma)$ determines the occupancy $x$, waiting flag $W$, and availability $A$. It substitutes out $x$ and $W$ and relaxes $A$ to $[0,1]$, leaving $(y,\sigma)$ explicitly binary. Proposition~\ref{prop:equivalence} establishes that this reduction preserves both the feasible projection and objective values.

Model~R is obtained from F in three steps. First, the duration and waiting rows define $\hat{x}_{kbct}(\sigma)$ and $\hat{W}_{bt}(y,\sigma)$. Let $\hat{G}_{bt}(\sigma)$ denote $G_{bt}$ after substituting $\hat{x}$. This substitution leaves the charger row in $(y,A,\sigma)$ and rewrites the objective so that each start carries the cost of its $\tau_{kbc}$ reserved slots. Second, the algebraically implied partition row is omitted. Third, $A_{bt}$ is relaxed to $[0,1]$, while the remaining structure preserves its integrality. Figure~\ref{fig:reduced-milp-box} states the result.

\emph{Derivation.}\ Fix $b$ and define the pre-service and through-service counts and the cumulative completions by
\begin{equation}
\begin{aligned}
n^-_{bt}(y)&=\!\sum_{\substack{(k,b)\in\mathcal{Y}:a_k<t}}y_{kb},\quad
n_{bt}(y)=\!\sum_{\substack{(k,b)\in\mathcal{Y}:a_k\le t}}y_{kb},\\
H_{bt}(\sigma)&=\sum_{u=1}^{t}\Gamma_{bu}(\sigma),
\end{aligned}
\label{eq:cumulative-proof-paper}
\end{equation}
where $H_{b0}=0$ by the empty-sum convention. Then $D_{bt}=n_{bt}-n^-_{bt}$, and telescoping the availability dynamics gives
$A_{bt}=A_b^0+H_{bt}-n^-_{bt}$. Counting starts not yet completed gives
$\hat{G}_{bt}=\varsigma_{bt}-H_{bt}$. Both identities are algebraic consequences of the model equalities and therefore hold over their continuous relaxations.

For binary $(y,\sigma)$, each $\hat{x}_{kbct}$ is a nonnegative integer. It is bounded by the corresponding job's total number of starts, which is at most one, and is therefore binary. Release feasibility and the start row give $0\le\varsigma_{bt}\le\rho_{bt}$, so $\hat{W}_{bt}=\rho_{bt}-\varsigma_{bt}\ge0$. Moreover, the serve row gives $n_{bt}\le A_b^0+H_{bt}$, while return and completion events satisfy $\rho_{bt}\le n_{bt}$ and $H_{bt}\le\varsigma_{bt}$. Therefore, $\hat{W}_{bt}\le A_b^0\le1$, and $\hat W$ is binary. The closed form for $A$ is integer-valued, so $A_{bt}\in[0,1]$ makes $A$ binary.

Finally,
\begin{equation}
A_{bt}-D_{bt}+\hat{W}_{bt}+\hat{G}_{bt}
=A_b^0+\rho_{bt}-n_{bt}=I^+_{bt}.
\label{eq:partition-derived-paper}
\end{equation}
Thus, the dropped partition row is an identity. Because $\ell_k>0$, every return counted by $\rho_{bt}$ has a corresponding service counted by $n_{bt}$. Therefore, $I^+_{bt}\le A_b^0\le1$. The nonnegative decomposition in~\eqref{eq:partition-derived-paper} gives $I^+_{bt}\ge0$ and $\hat{G}_{bt}\le1$. The omitted presence and battery-capacity inequalities are therefore redundant. Substituting $\hat x$ into F's objective and exchanging its finite sums yields~\eqref{eq:reduced-obj-paper}.

\begin{proposition}[Equivalence of the reduced model]
\label{prop:equivalence}
Under the finite-horizon assumptions above, F and R have the same feasible projection onto $(y,\sigma)$. Every feasible $(y,A,\sigma)$ of R has $A_{bt},\hat{x}_{kbct},\hat{W}_{bt}\in\{0,1\}$. Its extension $(y,A,\hat W,\sigma,\hat x)$ is feasible for F with the same objective value. Consequently, whenever the models are feasible, they attain the same optimal value.
\end{proposition}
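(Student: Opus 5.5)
We prove the two inclusions of the feasible projections separately, then compare objectives. For the direction F $\Rightarrow$ R, take any feasible $(y,A,W,\sigma,x)$ of F. Constraints \eqref{eq:compact-assign-paper}--\eqref{eq:compact-start-paper} appear verbatim in R. The duration row \eqref{eq:compact-duration-paper} forces $x=\hat x(\sigma)$, so the charger row of F is exactly the charger row of R with $x=\hat x(\sigma)$. Binary $A$ lies in $[0,1]$. Hence $(y,A,\sigma)$ is R-feasible. The objective of F evaluated at $x=\hat x(\sigma)$ equals \eqref{eq:reduced-obj-paper}: exchanging the finite sums over $t$ and $s$ turns $\sum_t\lambda_t e_c\hat x_{kbct}$ into $\sum_s e_c\big(\sum_{t=s}^{s+\tau_{kbc}-1}\lambda_t\big)\sigma_{kbcs}$. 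The only check needed is that $s+\tau_{kbc}\le T_{\max}$ keeps every occupied slot inside $\mathcal T^{\mathrm S}$, and this is built into $\mathcal S_{kbc}$.

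For the converse, take R-feasible $(y,A,\sigma)$ and follow the Derivation step by step.
\begin{enumerate}[label=(\roman*)]
\item Telescope \eqref{eq:compact-Adyn-paper} from \eqref{eq:compact-init-paper} to get the closed form $A_{bt}=A_b^0+H_{bt}-n^-_{bt}$. This is an integer, so with $A_{bt}\in[0,1]$ it is binary.
\item Each $\hat x_{kbct}$ is a sum of binaries $\sigma_{kbcs}$ over one job and one charger. By \eqref{eq:compact-start-paper} that sum is at most $y_{kb}\le1$, so $\hat x_{kbct}$ is binary.
\item Define $\hat W_{bt}=\rho_{bt}-\varsigma_{bt}$. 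It is nonnegative because every start $s\in\mathcal S_{kbc}$ has $s\ge r_k$ and each started job has $y_{kb}=1$. It is at most $A_b^0$ by the chain $\rho_{bt}\le n_{bt}\le A_b^0+H_{bt}\le A_b^0+\varsigma_{bt}$.
\end{enumerate}

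The middle inequality in (iii), $n_{bt}\le A_b^0+H_{bt}$, needs care. It follows from $D_{bt}\le A_{bt}=A_b^0+H_{bt}-n^-_{bt}$ and $n_{bt}=n^-_{bt}+D_{bt}$. The boundary epoch $t=T_{\max}$ has no serve row, but there $D_{b,T_{\max}}=0$. For $\rho_{bt}\le n_{bt}$ we use $\ell_k>0$, and also that R's $\rho$ counts only returns with $r_k\le t$, which are jobs in $\mathcal J$ or returns that have already occurred. For $H_{bt}\le\varsigma_{bt}$ we use $\tau_{kbc}\ge1$, so every completion counted by time $t$ started at or before $t-1$.

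Next we verify the rows of F that R omits. Row \eqref{eq:compact-W-paper} holds by definition of $\hat W$. Row \eqref{eq:compact-duration-paper} holds by definition of $\hat x$. For the partition row \eqref{eq:compact-partition-paper}, use the identity $\hat G_{bt}=\varsigma_{bt}-H_{bt}$. Plug it into $A_{bt}-D_{bt}+\hat W_{bt}+\hat G_{bt}$ together with the closed form for $A$. This gives $A_b^0+H_{bt}-n_{bt}+\rho_{bt}-H_{bt}=I^+_{bt}$, which is \eqref{eq:partition-derived-paper}.

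The step most likely to cause trouble is proving $\hat G_{bt}=\varsigma_{bt}-H_{bt}$ at the right epoch convention. A start at slot $s$ occupies slots $s,\dots,s+\tau-1$. It is counted in $\varsigma_{bt}$ once $s\le t$, and in $H_{bt}$ once $s+\tau\le t$. So the difference counts exactly the starts with $s\le t<s+\tau$, which is $\hat G_{bt}$ for $t\in\mathcal T^{\mathrm S}$. At $t=T_{\max}$ we have $\hat G=0$ by definition, and every started job has completed since $s+\tau\le T_{\max}$, so the difference is also $0$. I would check these edge epochs explicitly.

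With all F rows satisfied and all five families binary, the extension $(y,A,\hat W,\sigma,\hat x)$ is F-feasible. Its objective equals R's by the sum exchange above. The two feasible sets therefore project onto the same set of $(y,\sigma)$, and objective values agree pointwise along the correspondence. Hence, whenever either model is feasible, both attain the same optimal value.
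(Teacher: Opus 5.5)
Your proposal is correct and follows essentially the same route as the paper: you restrict an F-feasible point to $(y,A,\sigma)$ and extend an R-feasible point by $x=\hat x(\sigma)$, $W=\hat W(y,\sigma)$, using the same derivation. That derivation consists of the telescoped closed form $A_{bt}=A_b^0+H_{bt}-n^-_{bt}$, the count $\hat G_{bt}=\varsigma_{bt}-H_{bt}$, the chain $\rho_{bt}\le n_{bt}\le A_b^0+H_{bt}\le A_b^0+\varsigma_{bt}$, and the finite-sum exchange for the objective. Your boundary checks at $t=T_{\max}$ are a useful addition the paper leaves implicit; note that there the bound $n_{b,T_{\max}}\le A_b^0+H_{b,T_{\max}}$ comes from R's box constraint $A_{b,T_{\max}}\ge 0$ rather than a serve row.
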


\begin{IEEEproof}
The derivation proves integrality of the reconstructed families and all omitted identities and inequalities. Hence, restricting any F-feasible point to $(y,A,\sigma)$ satisfies R. Conversely, extending an R-feasible point by $x=\hat x(\sigma)$ and $W=\hat W(y,\sigma)$ satisfies every row of F. The finite-sum rearrangement above preserves the objective in both directions. The two maps preserve $(y,\sigma)$ and the objective values, which proves the projection statement. For nonempty feasible sets, this correspondence also establishes equality of the attained minima.
\end{IEEEproof}

Model~R reduces the explicit binary dimension from $m_{\mathrm F}=|y|+|\sigma|+|x|+|W|+|A|$ to $m_{\mathrm R}=|y|+|\sigma|$. Specifically, $|x|=|\mathcal{T}^{\mathrm S}|\sum_{(k,b)\in\mathcal{J}}|\mathcal{C}_b|$, $|\sigma|=\sum_{(k,b)\in\mathcal J}\sum_{c\in\mathcal C_b}|\mathcal S_{kbc}|\le|x|$, and $|A|=|W|=|\mathcal B|\,|\mathcal T^{\mathrm E}|$. Model~R removes $x$ and $W$ and makes $A$ continuous. Section~\ref{sec:experiments} quantifies the resulting computational effect.

\section{Price-Guided Battery-Path Decomposition}
\label{sec:price-guided-bpd}

The exact formulations couple battery assignments with charger and start-time decisions across the full time grid. A path-based formulation compresses each battery's temporal decisions into a single column while retaining service coverage and charger-capacity coupling in the master problem, supporting repeated solution as the fleet grows.

P-BPD operates over complete battery paths that compactly encode the exact model's time-indexed charging decisions. Each path comprises a feasible service sequence and its associated charging blocks. A set-partition master selects one path per battery, and a dynamic-programming pricing subproblem generates improving paths with negative reduced cost. P-BPD is a Dantzig--Wolfe/column-generation reformulation in which one column represents a complete feasible history of a single physical battery~\cite{barnhart1998}.

\subsection{Battery paths and path attributes}

For a battery $b$, let $V_b=\{k\in\mathcal{K}:(k,b)\in\mathcal{Y}\}$ be the set of events that battery $b$ can serve. In the canonical path model, a nonempty path begins with a battery satisfying $A_b^0=1$; a battery with $A_b^0=0$ takes the empty path. Each nonempty path records an ordered event sequence and its in-horizon charging decisions:
\begin{equation}
\begin{aligned}
p&=\big((k_1,\ldots,k_m),((c_i,s_i))_{i\in I_p}\big),\\
I_p&=\{i:1\le i<m\}\cup\{m:r_{k_m}\le T_{\max}\}.
\end{aligned}
\end{equation}
where $k_i\in V_b$ and $a_{k_1}<\cdots<a_{k_m}$. The recorded charger and integer start slot satisfy
\begin{equation}
\begin{aligned}
c_i&\in\mathcal{C}_b,\quad s_i\in\mathcal T^{\mathrm S},\quad
r_{k_i}\le s_i,\qquad i\in I_p,\\
s_i+\tau_{k_i b c_i}&\le
\begin{cases}a_{k_{i+1}},&i<m,\\T_{\max},&i=m.
\end{cases}
\end{aligned}
\end{equation}
The empty path is available to every battery. For every served event whose return lies in the modeled horizon, the path retains the corresponding recharge, including the terminal recharge following its final service. This terminal block preserves charger occupancy and cost accounting through the horizon boundary.

Let $\mathcal{P}_b$ be the set of feasible paths for battery $b$. Path indices are local to $b$. For each $p\in\mathcal{P}_b$, define
\begin{align}
\delta_{bpk}&=\begin{cases}
1, & \text{if path }p\text{ serves event }k,\\
0, & \text{otherwise,}
\end{cases}\\
q_{bpct}&=\begin{cases}
1, & \text{if path }p\text{ occupies charger }c\text{ in slot }t,\\
0, & \text{otherwise.}
\end{cases}
\end{align}
The path cost is
\begin{equation}
C_{bp}=\sum_{c\in\mathcal{C}}\sum_{t\in\mathcal{T}^{\mathrm S}}\lambda_t e_c q_{bpct}.
\end{equation}
Complete battery histories have also been used as columns in EV service-network design with charging and battery reuse~\cite{diao2026snd}. P-BPD specializes this column concept to exogenous swap-event coverage at a single station.

\subsection{Path master problem}

Introduce a binary variable $z_{bp}$ that equals one if battery $b$ follows path $p$. The path master problem is
\begin{subequations}
\label{prob:path-master-paper}
\begin{align}
\min_z\quad
& \sum_{b\in\mathcal{B}}\sum_{p\in\mathcal{P}_b}C_{bp} z_{bp}
\label{eq:pm-obj-paper}\\
\mathrm{s.t.}\quad
& \sum_{b\in\mathcal{B}}\sum_{p\in\mathcal{P}_b}\delta_{bpk}z_{bp}=1,
&& \forall k\in\mathcal{K},
\label{eq:pm-cover-paper}\\
& \sum_{p\in\mathcal{P}_b}z_{bp}=1,
&& \forall b\in\mathcal{B},
\label{eq:pm-onepath-paper}\\
& \sum_{b\in\mathcal{B}}\sum_{p\in\mathcal{P}_b}q_{bpct}z_{bp}\le 1,
&& \forall c\in\mathcal{C},\ t\in\mathcal{T}^{\mathrm S},
\label{eq:pm-charger-paper}\\
& z_{bp}\in\{0,1\},
&& \forall b\in\mathcal{B},\ p\in\mathcal{P}_b.
\label{eq:pm-domain-paper}
\end{align}
\end{subequations}
Constraint~\eqref{eq:pm-cover-paper} covers every customer event exactly once. Constraint~\eqref{eq:pm-onepath-paper} selects one physical history for each battery, and constraint~\eqref{eq:pm-charger-paper} enforces charger capacity at slot resolution. Customer-cover and charger-capacity coupling remains in the master, while all single-battery temporal feasibility conditions are embedded in the columns. Per-battery decomposition has also been used for large-scale fleet charging through Lagrangian relaxation~\cite{fotedar2025}. P-BPD keeps event coverage and charger-slot capacity in the master while embedding single-battery temporal feasibility in each column.

\subsection{Restricted master and artificial columns}

The number of feasible paths is exponential in the worst case, so P-BPD works with generated subsets $\widehat{\mathcal{P}}_b\subset\mathcal{P}_b$. The initial pool contains empty paths, singleton paths, and battery paths extracted from greedy feasible schedules. To initialize the restricted LP, artificial variables $u_k\ge0$ replace the cover rows with $\sum_{b,p\in\widehat{\mathcal P}_b}\delta_{bpk}z_{bp}+u_k=1$ and add $M\sum_k u_k$ to the objective for a fixed large penalty $M$. Final-solution acceptance requires $u_k=0$ for every event.

At each column-generation iteration, the linear relaxation of the restricted master is solved. Let $\alpha_k,\beta_b\in\mathbb R$ be the dual multipliers for the customer-cover and one-path equalities. The multiplier of the charger-capacity inequality is written as $-\mu_{ct}\le0$, where $\mu_{ct}\ge0$. The reduced cost of a candidate path $p$ for battery $b$ is
\begin{equation}
\bar{c}_{bp}=C_{bp}-\sum_{k\in\mathcal{K}}\alpha_k\delta_{bpk}
+\sum_{c\in\mathcal{C}}\sum_{t\in\mathcal{T}^{\mathrm S}}\mu_{ct}q_{bpct}-\beta_b.
\label{eq:reduced-cost-paper}
\end{equation}
A path with negative reduced cost is added to the restricted master.

\subsection{Dynamic-programming pricing}

For each battery $b$, pricing seeks a least-reduced-cost path over the predecessor neighborhood $\mathcal{N}_b(\cdot)$ defined below. Because $\beta_b$ is constant across battery $b$'s paths, it suffices to minimize
\begin{equation}
\widetilde{C}_{bp}=C_{bp}-\sum_{k\in\mathcal{K}}\alpha_k\delta_{bpk}
+\sum_{c\in\mathcal{C}}\sum_{t\in\mathcal{T}^{\mathrm S}}\mu_{ct}q_{bpct}.
\end{equation}
Define the dual-adjusted charging-slot cost
\begin{equation}
w_{ct}=\lambda_t e_c+\mu_{ct}.
\end{equation}
If battery $b$ serves event $i$ and then event $j$, the cheapest feasible charge between them has cost
\begin{equation}
\phi^b_{ij}=\min_{c\in\mathcal{C}_b}
\min_{\substack{s\in\mathcal T^{\mathrm S}:r_i\le s,\ s+\tau_{ibc}\le a_j}}
\sum_{t=s}^{s+\tau_{ibc}-1}w_{ct},
\label{eq:transition-cost-paper}
\end{equation}
where the minimum of an empty set is defined as $+\infty$. For fixed $(i,b,c)$, a prefix sum provides the cost of each candidate block in constant time. A sweep over integer starts maintains the least-cost block for successive deadlines.
For each candidate terminal event $j$, let $\phi^b_{j\dagger}$ be the same minimum with deadline $T_{\max}$ when $r_j\le T_{\max}$, and let $\phi^b_{j\dagger}=0$ otherwise. This is the adjusted cost of the required final recharge when its return lies in the horizon.

Sort events lexicographically by $(a_k,k)$ and define the feasible predecessor set
\begin{equation}
E_b(j)=\{i\in V_b:a_i<a_j,\ \phi^b_{ij}<+\infty\}.
\end{equation}
Exhaustive pricing uses $\mathcal N_b(j)=E_b(j)$. Band-limited pricing uses the $L$ elements of $E_b(j)$ with the largest $(a_i,i)$. When $|E_b(j)|<L$, band-limited pricing uses all elements. Let $F_b(j)$ be the minimum dual-adjusted value of a nonempty path ending at $j$, excluding a terminal charge. Under the convention $\min\varnothing=+\infty$, the recursion is
\begin{equation}
F_b(j)=-\alpha_j+
\min\left\{0,
\min_{i\in\mathcal{N}_b(j)}\left(F_b(i)+\phi^b_{ij}\right)
\right\}.
\label{eq:pricing-dp-paper}
\end{equation}
The zero option starts a new path at $j$. Induction in strict arrival order proves that, with exhaustive neighborhoods, $F_b(j)$ equals the minimum adjusted cost among all admissible nonempty paths ending at $j$, before its terminal recharge. Hence, the least reduced cost over both nonempty and empty paths is $\min\{\min_{j\in V_b}[F_b(j)+\phi^b_{j\dagger}],0\}-\beta_b$, where $\min_{j\in\varnothing}F_b(j)=+\infty$. When this minimum is below the negative reduced-cost tolerance, the algorithm reconstructs the minimizing event sequence and stored argmin charging blocks, including any required terminal block, and inserts the resulting path into the restricted master.

\subsection{Computational heuristics and implementation}
\label{sec:computational-policy-paper}

The reported P-BPD implementation combines three speed-oriented devices. Band-limited pricing retains the $L$ most recent feasible elements in each predecessor neighborhood. Cost-only transitions store the exact values $\phi^b_{ij}$ and $\phi^b_{j\dagger}$ with their argmins, reconstructing chargers and starts for each returned path; separability across charging blocks makes this representation lossless. Greedy schedules with randomized tie-breaking seed the initial column pool. Exhaustive pricing continued through convergence supplies the full path master's LP certificate. The experiments use fixed band and iteration budgets and assess the resulting integer schedules against the certified reference solutions and dual-bound optimality-gap guarantees described in Section~\ref{sec:experiments}.

\subsection{Online controllers}
\label{sec:controllers}

Model~R and P-BPD serve as window solvers in a rolling-horizon controller. At each decision epoch, they optimize a finite look-ahead horizon and commit the actions due before the next re-planning event. A source-job augmentation represents the carry-in state. An in-trip or waiting battery contributes an exogenous job with a fixed release epoch and a decision-dependent compatible charging block, while an ongoing charge contributes fixed remaining occupancy and completion. In the path graph, these states become source-to-first-event arcs with the same cost, occupancy, and ready-time data. Batteries ready at the re-plan epoch have $A_b^0=1$. The augmentation adds identical fixed or source-job terms to F and R, and Proposition~\ref{prop:equivalence} continues to apply to their endogenous $(y,\sigma)$ block. Each window model assigns a high penalty to an artificial no-service option, prioritizing service in the restricted solve; the controller records any selected no-service option as a committed stockout.

\section{Experiments}
\label{sec:experiments}

The experiments quantify the trade-offs among charging cost, service, and computation time on synthetic instances and in a recorded-demand replay under the deployed station configuration. Models F and R share the same optimum, and each run that reaches the stated relative MIP gap certifies its incumbent. P-BPD solution quality is evaluated against certified references or available exact-solver dual bounds.

Every experiment is a simulation on the synthetic instances of Section~\ref{sec:synthetic-gen} or the recorded-demand instances of Section~\ref{sec:real-data}; rolling experiments use a 24-hour look-ahead horizon. All runs price slots with KEPCO's low-voltage, option-II summer TOU tariff---off-, mid-, and peak rates~\cite{KEPCO_ElectricityTariff} of $83.1$, $140.0$, and $270.8$~KRW/kWh over the bands of Fig.~\ref{fig:station}(b)---so cost differences between methods reflect scheduling alone. A slot at calendar origin $h_0$ and duration $\Delta$ hours is priced at hour $h(t)=(h_0+t\Delta)\bmod24$, which reduces to $24t/T$ for the midnight-origin synthetic instances. All solves use HiGHS~1.14.0~\cite{huangfu2018parallelizing} on an Apple~M1 (8 cores, 8\,GB RAM, macOS~14.0), single-threaded unless otherwise noted. %

\subsection{Compared methods}
\label{sec:methods-compared}

Five methods are compared: the exact formulation F, its equivalent reduction R (Section~\ref{sec:methods}), P-BPD (Section~\ref{sec:price-guided-bpd}), and two baselines. \emph{ASAP} charges each returned battery at the earliest free compatible slot; the \emph{TOU baseline} chooses the cheapest feasible slot under a per-pool deadline that keeps the pool's running ready count nonnegative. Both baselines order jobs by return epoch and use the lowest-indexed compatible battery to determine each recharge duration. When no feasible start remains, the affected swap is counted as unserved, so the reported service rate directly quantifies charger-capacity conflicts.

\emph{Solver configuration.} F and R are solved single-threaded by HiGHS~1.14.0 through SciPy~1.17.1's \texttt{milp} at relative MIP gap $10^{-4}$, with a wall-clock budget of $600$\,s at $B\le18$ and for R at $B{=}36$, and $1800$\,s otherwise. An F or R run is recorded as certified when it reaches the target gap within its wall-clock budget. P-BPD uses predecessor-band size $L=5$, initializes the restricted master with empty paths, singleton paths, and paths extracted from $8$ greedy schedules, performs at most $10$ restricted-LP pricing iterations, and adds paths below reduced-cost tolerance $-10^{-7}$. It then solves the restricted integer master at relative gap $10^{-4}$ offline or $10^{-2}$ in rolling re-plans. The solver gap certifies F and R over their full formulations; for P-BPD, it certifies optimization over the current restricted column set. The comparisons below assess P-BPD's full-model quality against certified references or exact-solver dual bounds.

\subsection{Evaluation metrics and protocol}
\label{sec:metrics-protocol}

A controller's service is summarized at two resolutions. Let $\mathbb{1}[\cdot]$ denote the indicator. The \emph{day reliability} $R_{\mathrm{day}}$ is the fraction of evaluated days served with no stockout,
\begin{equation}
R_{\mathrm{day}}=\frac{1}{N}\sum_{d=1}^{N}\mathbb{1}\!\left[\text{day }d\text{ served with no stockout}\right],
\label{eq:day-reliability}
\end{equation}
where $N\in\mathbb Z_{++}$ is the number of evaluated days. A day contributes to $R_{\mathrm{day}}$ precisely when the assignment and service conditions \eqref{eq:compact-assign-paper}--\eqref{eq:compact-serve-paper} hold for every committed step, so $R_{\mathrm{day}}$ measures complete daily service. The \emph{swap service rate} $S$ is the fraction of individual swaps served from full stock,
\begin{equation}
S=\frac{1}{K}\sum_{k=1}^{K}\mathbb{1}\!\left[\text{swap }k\text{ served from full stock}\right],
\label{eq:swap-service}
\end{equation}
with $K\in\mathbb Z_{++}$ the number of unique committed swaps over the evaluation interval. Each day contributes equally to $R_{\mathrm{day}}$, whereas each swap contributes weight $1/K$ to $S$. A feasible static-horizon plan serves every represented event, and a rolling controller attains $R_{\mathrm{day}}=S=1$ when every committed service succeeds. Because replayed swaps are temporally dependent within and across days, the two empirical measures report service at complementary day and event resolutions.

The reported cost is the energy component of the bill. Under the station's low-voltage tariff, the $2{,}390$~KRW/kW demand charge is levied on contracted power, which the installed chargers fix identically for every method; the comparisons therefore isolate the schedule-dependent energy component. A common terminal-recharge convention charges every return represented in the instance, including the final return of each battery path, giving every method the same terminal-inventory treatment at $T_{\max}$. On instances with positive denominators, the cost premium is $(\text{cost}-\text{reference})/\text{reference}$ and the reduction against a baseline is $(\text{baseline}-\text{cost})/\text{baseline}$. Solve time is the wall-clock time of one solve: a finite-horizon certification in the synthetic scaling study and one window re-plan online.

A synthetic instance spans one simulated day, while the formulation accepts any finite horizon. The reduction, ablation, Small, and Medium studies use $30$ seeds per regime (cases $0$--$29$); Section~\ref{sec:frontier} specifies the five-seed design for the Large and xlarge regimes.

\subsection{Synthetic experiments}
\label{sec:synthetic}

\subsubsection{Synthetic instance generation}
\label{sec:synthetic-gen}

A \emph{regime} fixes a station size. Table~\ref{tab:regime-summary-paper} lists the four scale-study regimes ($B=18,36,54,72$). An additional $B{=}8$ regime supports the reduction comparison and ablation, with its configuration given in Section~\ref{sec:cert-frontier}.

\emph{Station structure.} The $B=|\mathcal{B}|$ batteries and $C=|\mathcal{C}|$ chargers are each shuffled and assigned cyclically to the regime's compatibility classes, with the assignment ensuring that every class has at least one charger. A battery recharges only on a charger of its class. In these instances, each class corresponds to a single demand pool.
Charger $c$ has speed $s_c\sim\mathcal{U}(1.15,2.40)$ and draws energy $e_c=0.8+0.55\,s_c$ per slot. Battery $b$ has duration factor $f_b\sim\mathcal{U}(0.85,1.20)$, and all batteries start full.

\emph{Events.} Each event $k$ draws a charging need $\nu_k\sim\mathcal{U}(\nu_{\min},\nu_{\max})$ (abstract units) and an integer return delay $\ell_k\sim\mathcal{U}\{\ell_{\min},\ell_{\max}\}$ from the ranges in Table~\ref{tab:regime-summary-paper}. Its battery returns $\ell_k$ slots after issue and then recharges on a compatible charger $c$ in $\tau_{kbc}=\max(1,\lceil \nu_k f_b/s_c\rceil)$ slots, with incompatible pairings barred. The models receive the arrivals, return delays, pools, and $\tau$; the charging need $\nu_k$ is represented through $\tau$. The per-battery load $K/B$ ($K=|\mathcal{K}|$) is near $3$ in every tabulated regime ($2.78$ at $B{=}18$) and $2.25$ in the additional $B{=}8$ regime.

\emph{Feasibility by construction.} Arrival epochs are generated jointly with a witness schedule: at each generator step, an available battery is assigned an event no earlier than its ready epoch and, on return, is placed on the compatible charger with the earliest feasible finish. The generator retains events whose assignment and recharge finish within the horizon. The witness fixes the event times and remains separate from the evaluated solver inputs. Every retained synthetic instance therefore admits at least one feasible no-stockout schedule, allowing the experiments to quantify each method's ability to recover full service on known-feasible instances.

\begin{table}[t]
\centering
\caption{Synthetic instance regimes.}
\label{tab:regime-summary-paper}
\begin{tabular}{lrrrrccc}
\toprule
Regime & $B$ & $C$ & $K$ & $T$ & classes & $\ell$ & need \\
\midrule
Small   & 18 & 5  & 50  & 72  & 3 & 1--5 & 2.5--8.0 \\
Medium  & 36 & 8  & 110 & 120 & 4 & 1--6 & 2.5--8.0 \\
Large   & 54 & 12 & 165 & 120 & 4 & 1--6 & 2.5--8.0 \\
xlarge  & 72 & 16 & 220 & 120 & 4 & 1--6 & 2.5--8.0 \\
\bottomrule
\end{tabular}
\end{table}

\subsubsection{The reduction's certified speedup}
\label{sec:cert-frontier}

The certified objective values of F and R agree to numerical tolerance on every certified paired run, as established by Proposition~\ref{prop:equivalence}. The ratios of median F to median R certification times are $3.1$, $5.6$, and $7.2$ at $B=8$, $18$, and $36$, respectively (Fig.~\ref{fig:reduction}). The $B{=}8$ setting ($C{=}3$, $K{=}18$, $T{=}36$, $2$ classes, $\ell$ $1$--$4$, need $2.5$--$7.0$) is the additional regime used for the reduction study and ablation.

\begin{figure}[t]
\centering
\includegraphics[width=\columnwidth]{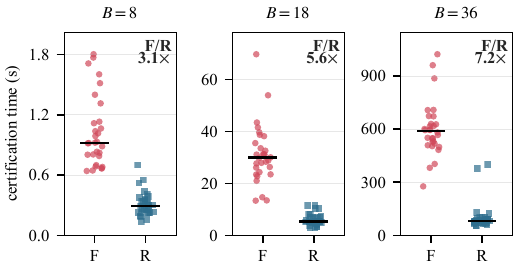}
\caption{Per-instance certification times for F and R ($n{=}30$ at $B{=}8$, $18$, and $36$). Horizontal lines mark medians; panel labels give the F-to-R ratio of medians.}
\label{fig:reduction}
\end{figure}

\subsubsection{Solve time, cost, and service across scale}
\label{sec:frontier}

The Small ($B{=}18$) and Medium ($B{=}36$) regimes use $30$ seeds each; the Large ($B{=}54$) and xlarge ($B{=}72$) regimes use $5$ seeds each.

On the 30-seed regimes (Fig.~\ref{fig:scoreboard}), F and R obtain a $10^{-4}$ relative MIP certificate on every seed, and P-BPD returns a feasible full-service schedule on every seed, with median solve times of $0.24$--$1.4$~s and median premiums of $7\%$ (Small) and $8\%$ (Medium) over the certified reference. Service metrics therefore distinguish the two operational baselines on these known-feasible instances.

Both baselines solve in under $0.05$~s. For each baseline, the certified reference reduces cost by $50$--$60\%$ on fully served instances with an available certified reference, corresponding to ASAP and TOU premiums of $131$--$152\%$ and $100$--$111\%$, respectively. In the Small regime, ASAP attains $80\%$ day reliability and $99.9\%$ swap service, while TOU attains $60\%$ and $99.1\%$, respectively.

The two larger regimes characterize exact-certification scalability. Model size reaches approximately $0.5$, $1.5$, and $3.5$ million binaries at Medium, Large, and xlarge, respectively. At both Large and xlarge, F reaches the time budget without certification on all $5$ seeds, while R certifies $4$ of $5$ and $1$ of $5$ seeds, respectively. P-BPD returns a schedule on every seed in both regimes, with median times of $4.9$ and $8.0$~s. Each plotted premium uses a certified reference when available; for the remaining instances, a positive exact-solver dual bound $L$ and P-BPD cost $H$ establish $0<L\le\mathrm{OPT}\le H$ and hence $(H-\mathrm{OPT})/\mathrm{OPT}\le(H-L)/L$. These certificates yield ex post optimality-gap upper bounds of $9.2$--$10.9\%$ at Large and $9.0$--$11.9\%$ at xlarge.

\begin{figure*}[t]
\centering
\includegraphics[width=0.8\textwidth]{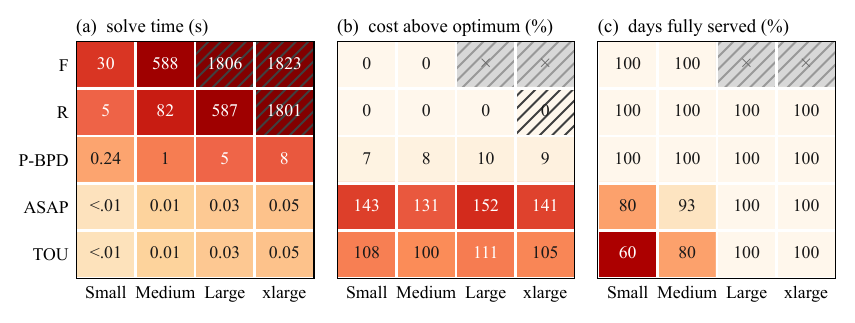}
\caption{Method comparison under the real summer tariff: (a) median solve time, (b) median cost comparison, and (c) percent of days fully served, by method (rows) and station size (columns; $n{=}30$ seeds at Small and Medium, $n{=}5$ at Large and xlarge). Panel (b) reports premiums over certified reference costs when available and, for P-BPD on the remaining instances, certified upper bounds on its relative optimality gap. Baseline cost cells use fully served instances with an available certified reference. Hatching marks exact solves stopped at the time budget; their times are right-censored. $\times$ denotes unavailable comparisons.}
\label{fig:scoreboard}
\end{figure*}

\subsubsection{Ablation of the P-BPD heuristic}
\label{sec:ablation}

Each of P-BPD's three computational devices (Section~\ref{sec:computational-policy-paper}) is removed one at a time and compared with the full method on $90$ instances ($30$ seeds at each of $B{=}8$, $18$, and $36$) under the real summer tariff (Table~\ref{tab:ablation}). Each device contributes a distinct computational benefit. Cost-only transitions reduce median time from $0.39$ to $0.24$~s at unchanged cost, and band-limited pricing reduces the observed maximum from $14.2$ to $5.9$~s while preserving the median. Greedy warm starts increase the plan-return count from $87$ to $89$ of $90$ and reduce cost by approximately $1\%$, with a median-time change from $0.16$ to $0.24$~s.

\begin{table}[h]
\centering
\caption{Leave-one-out ablation of P-BPD's three devices on $90$ instances under the real summer tariff. Runtime summaries cover all runs; $\Delta$cost is the median paired percentage change over instances for which both the variant and full method return plans. The full method is the reference.}
\label{tab:ablation}
\begin{tabular}{lrrrr}
\toprule
                          & Median   & Max      & Plans found & $\Delta$cost \\
Variant                   & time (s) & time (s) & (/90)       & (\%) \\
\midrule
Full method               & 0.24 & 5.9  & 89 & --- \\
$-$\,cost-only transition & 0.39 & 7.7  & 89 & $\approx 0$ \\
$-$\,band-limited pricing & 0.24 & 14.2 & 89 & $\approx 0$ \\
$-$\,greedy warm starts   & 0.16 & 2.2  & 87 & $+1.0$ \\
\bottomrule
\end{tabular}
\end{table}

\subsection{Recorded-demand replay and deployment-oriented scaling}
\label{sec:real-anyang}

This section replaces the synthetic generator with the deployed station's configuration and recorded demand, replaying each method at the recorded operating point and tracing its performance as the station scales. Operation is event-triggered under the connected-fleet setting of Section~\ref{sec:background}: at each swap arrival or charge completion, the controller re-solves a 24-hour sliding window from the current planned arrival and return schedule and executes actions due before the next event. The replay study (Section~\ref{sec:live-admissibility}) supplies the recorded June sequence at the station's real size as this schedule; the station-scaling study (Section~\ref{sec:capacity}) samples demand from the same log at scale factors $f$. Model~R and P-BPD operate on the same demand stream and initial station configuration under a common controller protocol.

\subsubsection{Station configuration and recorded demand}
\label{sec:real-data}
\label{sec:real-demand}

The station configuration is a commercial battery-swap service for electric taxis in Anyang, Gyeonggi Province, South Korea. Four customer pools ($84.0$, $64.0$, $72.0$, $77.4$~kWh) span three compatibility classes, the $72.0$ and $77.4$~kWh pools sharing one, and six chargers serve them: $100$/$50$/$100$~kW for the $84.0$~kWh class, one $50$~kW for the $64.0$~kWh class, and two $50$~kW for the shared class, the second a dual-generation bay serving both pools. Each charger draws $0.2P$~kWh in each occupied $12$-min slot at rated power $P$~kW, and a fixed buffer of $7$/$10$/$2$/$6$ full spares backs the pools. A capacity-$\kappa$ battery on a power-$P$ charger recharges in $\lceil \kappa\cdot0.58\cdot60\,\eta_P/(12P)\rceil$ slots, with charging-overhead factors $\eta_{50}{=}1.270$ and $\eta_{100}{=}1.604$ and $0.58$ the assumed summer return depth of discharge. %

Each active vehicle holds one battery in circulation, so $B$ is the shelf buffer plus one battery per active, distinct-in-window vehicle in each pool. The warm-up interval generates the carry-in state at the start of reporting, and the trailing look-ahead interval is excluded from the reported metrics.

The replay uses every completed swap recorded at the station during June~2026 (June~1--30) in the four deployed customer pools: $1{,}002$ swaps by $37$ vehicles over $30$ full days, a mean of $33.4$ per day, split $609$/$215$/$128$/$50$ across the $84$/$77.4$/$64$/$72$~kWh pools. Each record gives the swap's start and end times in Korea Standard Time, the vehicle, and the battery capacity taken; personal fields are dropped, and each vehicle is relabeled with an anonymous identifier. Each record is one event, and the dispatched battery returns depleted at that vehicle's next recorded swap, which sets its return delay; a vehicle's last swap in the horizon has no in-window return. The recorded-demand replay evaluates reliability directly on observed demand, complementing the known-feasible synthetic instances. %

\subsubsection{Rolling-horizon admissibility on real demand}
\label{sec:live-admissibility}

The station's recorded June~2026 swap log (Section~\ref{sec:real-demand}) is replayed as a continuous multi-day stream under the connected-fleet schedule assumption. R and P-BPD follow the same recorded demand and controller protocol and use a relative solver gap of $10^{-2}$ for their respective integer models. At each re-plan, the next 24 hours of realized arrivals and returns are supplied as planning input, isolating rolling-scheduler performance under perfect information. Across $1{,}118$ R windows and $1{,}064$ P-BPD windows, both controllers return plans within the $120$\,s budget, and fallback usage is zero. %
Implementation tests validate the rolling state transition against an independently coded window engine and enumerated small instances against a brute-force oracle.
The reported interval spans the $30$ full June days, with cost compared per day.

The station's real size is comparable to the Large and xlarge rungs of the synthetic scaling study (Section~\ref{sec:frontier}). Its observed operating intensity is approximately $33$ swaps per day across $62$ batteries, $K/B\approx0.5$, or one-sixth of the synthetic $K/B\approx3$. The replay evaluates reduced-model re-plan latency at this asset scale under observed operating intensity. %

On real demand, the reduced-model re-plan fits the per-window compute budget: over the $30$ days, R returns each window in a mean of $3.2$\,s ($95$th percentile $7.9$\,s, maximum $19.6$\,s), and every committed swap is served. P-BPD returns every window in under $2.2$\,s. Thus, both controllers remain within the $120$\,s budget on every window, and all $30$ reported days have no stockout.
Over the same recorded-demand stream, the aggregate committed costs of R and P-BPD agree within $0.12\%$. Both reduce cost by approximately $50\%$ relative to ASAP and by $1.3\%$ relative to the TOU baseline.

\subsubsection{The reduced-model-to-heuristic crossover at scale}
\label{sec:capacity}

\begin{figure}[t]
\centering
\includegraphics[width=\columnwidth]{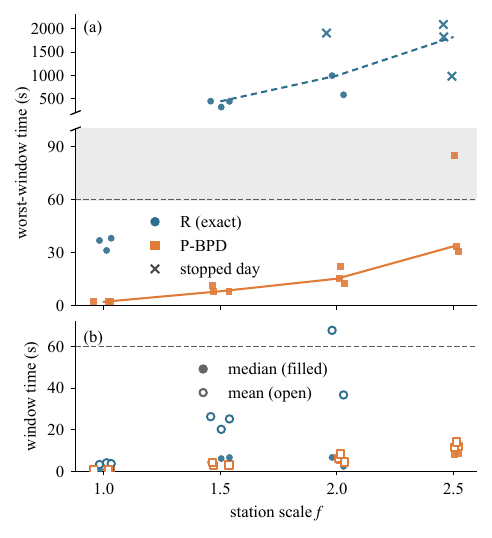}
\caption{Re-plan times relative to the $60$\,s operational target (dashed; exceedance region shaded in (a)): (a)~per-day maximum times, with an axis break at $100$\,s and per-rung median lines; (b)~per-day medians and means for completed days. Crosses mark days ending at the screen's first stopping event (a solve above $D$ or a right-censored solve); their recorded maxima and the R trace for $f{\ge}2$ are lower bounds.}%
\label{fig:crossover}
\end{figure}

The station-scaling study examines the observed computational crossover on days sampled from the June log. The resampling unit is the \emph{car-day}, one vehicle's complete recorded swap sequence on one June date: car-days are drawn with replacement from the full June set of their battery pool, and every drawn arrival time receives independent uniform jitter of $\pm12$ minutes before slotting. For each pool $p$, an instance-day draws $\mathrm{round}(f\,M_p)$ car-days, where $M_p$ is the pool's car-day count on the busiest June day ($18$/$5$/$1$/$8$ for the $84.0$/$64.0$/$72.0$/$77.4$~kWh pools; $32$ car-days, approximately $40$ swaps). An instance comprises a sampled warm-up day, reported day, and look-ahead day at a common factor $f\in\{1.0,1.5,2.0,2.5\}$; each rung is evaluated on three independently sampled instances. Chargers scale as $\mathrm{round}(6f)$ by cycling the real type tuple, and spares as $\mathrm{round}(f\cdot\mathrm{buffer})$ per pool. Each sampled car-day represents a distinct synthetic vehicle carrying one battery, and vehicles do not recur across days. The resulting roster-scaled populations contain $121$/$184$/$242$/$300$ batteries across the rungs and represent roster scaling, distinct from whole-station replication. %

R and P-BPD follow the same sampled demand streams, initial configurations, and re-planning protocol at the operational gap $10^{-2}$, with fallback disabled. Solves use a nominal $1800$\,s limit; because wall-clock termination is recorded when the solver returns, recorded times can exceed the nominal limit, and unfinished solves are right-censored. For a reported day with $K$ swaps, $D=86400\,\mathrm{s}/K$ defines a nominal per-swap compute target. Because controllers also re-plan at charge completions, $D$ serves as a normalized stress-screen threshold rather than a full serial-throughput budget. The screen applies $D$ uniformly to every observed window and stops at the first solve above $D$ or the first right-censored solve. It includes warm-up windows, and all four recorded stopping events occur during warm-up. %

P-BPD satisfies the deadline-$D$ screen on all $12$ sampled days, with a maximum window time of $84.8$\,s. R's pass counts are $3$ of $3$, $3$ of $3$, $2$ of $3$, and $0$ of $3$ across $f=1.0$, $1.5$, $2.0$, and $2.5$, respectively. At $f{=}2.5$, two days are right-censored at the nominal limit, and the third completes a $983$\,s solve against $D=847$\,s. Relative to the $60$\,s operational target, R achieves full-rung coverage at $f{=}1.0$ (maximum window $38$\,s), while P-BPD meets the target on $11$ of $12$ days. Window models grow from approximately $10^{5}$ to $1.4\times10^{6}$ solver variables across the rungs. %

Tail windows determine the screen outcomes. Among windows observed before each stopping point, pooled median solve times remain between $0.8$ and $5.8$\,s at every rung, while the day-maximum window on all $8$ completed R days occurs before the morning demand peak, when the uncommitted demand set and window model are largest. R windows above $100$\,s number $0$ of $272$ solved windows at $f{=}1.0$, $25$ of $365$ at $f{=}1.5$, and $39$ of $279$ at $f{=}2.0$; the recorded maximum grows from $38$ to $447$ to $1912$\,s, with the last value right-censored at the nominal $1800$\,s limit. Across the scale range, $D$ changes from approximately $2160$ to $850$\,s. At $f{=}1.5$, approximately eight windows per sampled day exceed $100$\,s and therefore lie above the $60$\,s target.%

\section{Conclusion}
\label{sec:conclusion}

This paper formulated no-stockout, TOU-aware charging as an exact per-battery MILP and proved its equivalence to Model~R, which concentrates binary decisions in service assignments and charging starts. In the paired reduction study, Model~R lowered median exact-certification time by factors of $3.1$--$7.2$. In the synthetic scaling study, P-BPD returned a full-service schedule for every instance; regime-level median solve times ranged from $0.24$ to $8.0$\,s, with $7$--$8\%$ median cost premiums over certified reference costs at the Small and Medium scales and $9$--$12\%$ certified ex post optimality-gap upper bounds at the Large and xlarge scales. In the 30-day perfect-information station replay, both rolling controllers served all $1{,}002$ recorded swaps while reducing charging-energy cost by approximately $50\%$ relative to immediate charging and by $1.3\%$ relative to the TOU baseline. Together, the exact reduction and price-guided path heuristic provide a scalable scheduling framework that enforces event-level service in every accepted schedule at physical-battery resolution.

\bibliographystyle{IEEEtran}
\bibliography{references_shared,references}

\end{document}